\documentclass[journal,hidelinks]{IEEEtran}

\usepackage{amsmath}
\usepackage{amsfonts}
\usepackage{amssymb}
\usepackage{bm}

\usepackage{cite}
\usepackage{acronym}

\usepackage{graphicx}
\usepackage[caption=false,font=footnotesize]{subfig}

\usepackage{algorithm}
\usepackage{algpseudocode}

\algrenewcommand\algorithmicrequire{\textbf{Input:}}
\algrenewcommand\algorithmicensure{\textbf{Output:}}

\usepackage{xcolor}

\definecolor{myblue}{RGB}{30,100,180}
\definecolor{myred}{RGB}{190,45,45}
\definecolor{linkblue}{RGB}{32,105,185}
\definecolor{linkred}{RGB}{195,48,48}
\definecolor{panelblue}{RGB}{55,110,175}
\definecolor{trackgray}{RGB}{105,110,120}

\usepackage{tikz}
\usetikzlibrary{
    arrows.meta,
    calc,
    positioning,
    decorations.pathreplacing,
    patterns,
    backgrounds
}

\usepackage[hidelinks]{hyperref}
\usepackage{orcidlink}

\renewcommand{\Re}{\mathfrak{Re}}

\newtheorem{theorem}{Theorem}
\newtheorem{lemma}{Lemma}
\newtheorem{proposition}{Proposition}
\newtheorem{corollary}{Corollary}
\newtheorem{remark}{Remark}

\makeatletter
\def\@begintheorem#1#2{%
    \trivlist
    \item[\hskip\labelsep{\bfseries #1\ #2:}]%
    \itshape
}
\def\@opargbegintheorem#1#2#3{%
    \trivlist
    \item[\hskip\labelsep{\bfseries #1\ #2\ (#3):}]%
    \itshape
}
\makeatother

\newcommand{\vect}[1]{\mathbf{#1}}
\newcommand{\T}{\mathsf{T}}
\newcommand{\Hm}{\mathsf{H}}
\newcommand{\gc}{\gamma_{\rm c}}
\newcommand{\Gt}{G_{\rm t}}
\newcommand{\gcb}{\bar\gamma_{\rm c}}
\newcommand{\Gtb}{\bar G_{\rm t}}
\newcommand{\hu}{\vect{h}_{\rm u}}
\newcommand{\htt}{\vect{h}_{\rm t}}
\newcommand{\Ds}{\Delta_s}

\begin{document}
\title{Movable Beyond-Diagonal RIS for ISAC:\\ Optimal Placement and Pareto Boundary}

\author{Chandan~Kumar~Sheemar,~\IEEEmembership{Member,~IEEE,} Sourabh Solanki~\IEEEmembership{Member,~IEEE,}  and Symeon~Chatzinotas~\IEEEmembership{Fellow,~IEEE,}
\thanks{C.K. Sheemar and S. Chatzinotas are with the Interdisciplinary Centre for Security, Reliability and Trust (SnT), University of Luxembourg, L-1855 Luxembourg (e-mail: chandankumar.sheemar@uni.lu; symeon.chatzinotas@uni.lu). S. Solanki is with the National Institute of Technology Warangal, TS 506004, India (e-mail: ssolanki@nitw.ac.in); }}

\maketitle
 
 \begin{abstract}
This letter investigates integrated sensing and communication
using a movable beyond-diagonal reconfigurable intelligent
surface (BD-RIS). Under a rank-one line-of-sight BS--RIS
channel, we derive the exact communication--sensing Pareto
boundary of a fully-connected BD-RIS in terms of
user--target channel correlation. For line-of-sight RIS
links, maximizing this correlation yields the optimal
placement. We characterize the minimum track aperture
that eliminates the illumination trade-off and show that
a movable diagonal RIS then attains the same optimum.
Below this threshold, a dynamic program offers certified
placement accuracy. Numerical results examine hardware
connectivity, angular mismatch, and Rician fading.
\end{abstract}

\begin{IEEEkeywords}
Beyond-diagonal RIS, integrated sensing and communication, movable elements, Pareto boundary.
\end{IEEEkeywords}

\section{Introduction}
\IEEEPARstart{I}{ntegrated} sensing and communication (ISAC) lets one waveform and one aperture serve users and probe targets~\cite{liu2022isac}. Beyond-diagonal reconfigurable intelligent surfaces (BD-RISs) interconnect their elements through a reconfigurable impedance network, which generalizes the diagonal phase-shift model to symmetric unitary scattering matrices~\cite{shen2022modeling,nerini2024closed}. BD-RIS-aided ISAC outperforms diagonal RIS in sensing SNR and throughput~\cite{esmaeilbeig2025bd,liu2024enhancing}, but with elements at fixed positions and through alternating optimization.

A fixed surface, however, faces a limitation that no choice of scattering matrix can remove. When the BS reaches the surface over a line-of-sight link, this link has rank one, and the BS can only scale the incident field, not shape it. The surface must then serve the user and the target from a single field. With $\lambda/2$ spacing, the steering vectors toward two well-separated directions are nearly orthogonal, and their correlation decays as the surface grows. In this case each function can be favored only at the expense of the other. Interconnections widen the set of reflected fields but do not change this geometry, and adding elements makes it worse. Moving the elements changes the geometry itself, which raises two questions this letter answers: \emph{where should the elements go, and how much hardware do we still need once they are placed?} The second question also has a cost dimension: a fully connected (FC) BD-RIS needs $M(M+1)/2$ tunable impedances, whereas a diagonal RIS needs only $M$.

The element position is a second degree of freedom~\cite{zhu2024movable}. RISs with movable elements improve ISAC~\cite{ye2026joint,peng2026exploiting}. Xu \emph{et al.}~\cite{xu2026movable} combined movement with BD interconnections and found numerically, for the multiuser sum rate, that moving and interconnecting each win in a different regime. \emph{For ISAC, it is not yet known how the BD-RIS elements should be placed, nor whether moving and interconnecting complement or replace each other.}  

This letter answers these questions for a movable BD-RIS that serves one user and illuminates one target. Our contributions are:
\begin{itemize}
\item We show that an FC BD-RIS turns the surface into a field synthesizer, so the Pareto boundary takes the two-direction form known from MISO beamforming~\cite{jorswieck2008complete} and is governed only by the user--target correlation $\rho$. For group-connected and diagonal RISs, the supported part of the boundary follows from an exact two-dimensional search.
\item Under LoS, placements with larger $\rho$ Pareto-dominate those with smaller $\rho$, so maximizing $\rho$ is optimal for the FC BD-RIS. The boundary reduces to the single-function optimum iff the aperture exceeds $D^\star=(M-1)\lambda/|\Ds|$, using a lattice that places a grating lobe on the target; there, a movable diagonal RIS is also optimal. Below $D^\star$, a dynamic program (DP) with a certified accuracy bound and a closed-form clustered lattice give the placement.
\item Simulations validate the analysis, compare against a fixed sparse array of the same aperture under geometry mismatch, and evaluate Rician fading with an absolute user-SNR requirement.
\end{itemize}

\emph{Notation:} $(\cdot)^*$, $(\cdot)^\T$, $(\cdot)^\Hm$ denote conjugate, transpose and Hermitian transpose; $\vect{v}_{\mathcal G}$ is the subvector of $\vect{v}$ indexed by $\mathcal G$; $[a]^+=\max(a,0)$.

\section{System Model}

\begin{figure}[t]
\centering
\begin{tikzpicture}[x=1cm,y=1cm,>=Stealth,font=\footnotesize]
\fill[pattern=north east lines] (1.75,0) rectangle (2.35,1.45);
\draw (1.75,0) rectangle (2.35,1.45);
\node[font=\scriptsize] at (2.05,-0.17) {blockage};
\draw[thick] (0.45,0) -- (0.45,0.95); \draw[thick] (0.2,0) -- (0.7,0);
\draw[fill=white] (0.3,0.95) rectangle (0.6,1.55);
\foreach \y in {1.05,1.17,...,1.5} \fill (0.45,\y) circle (0.028);
\node at (0.45,-0.17) {BS};
\draw[line width=1.3pt] (2.75,2.45) -- (5.55,2.45);
\foreach \x in {2.85,3.1,3.6,3.85,4.4,4.9,5.15,5.45}
  \draw[fill=white] (\x-0.07,2.38) rectangle (\x+0.07,2.52);
\draw[thin] (2.85,2.52) to[bend left=60] (3.6,2.52);
\draw[thin] (3.1,2.52) to[bend left=60] (4.4,2.52);
\draw[thin] (3.85,2.52) to[bend left=60] (4.9,2.52);
\draw[thin] (4.4,2.52) to[bend left=60] (5.45,2.52);
\draw[<->] (3.42,2.24) -- (3.78,2.24);
\draw[<->] (4.97,2.24) -- (5.33,2.24);
\draw[|<->|] (2.75,3.12) -- (5.55,3.12) node[midway,fill=white,inner sep=1pt] {$D$};
\node[right,align=left,font=\scriptsize] at (5.6,2.55) {Movable\\BD-RIS\\ Elements};
\draw[densely dotted] (4.15,2.35) -- (4.15,0.95);
\coordinate (T) at (3.35,0.55);
\coordinate (U) at (5.4,0.6);
\draw[->,thick] (0.62,1.42) -- (2.95,2.36) node[pos=0.5,above,sloped] {$\vect{G}$};
\draw[->,thick,red] (4.05,2.36) -- ($(T)+(0.08,0.2)$) node[pos=0.55,left] {$\htt$};
\draw[->,thick,blue] (4.25,2.36) -- ($(U)+(-0.1,0.22)$) node[pos=0.55,right] {$\hu$};
\draw[->,red,dashed] ($(T)+(-0.12,0.22)$) -- (3.86,2.36);
\draw[->,red,dashed] (2.95,2.25) -- (0.65,1.3) node[pos=0.55,below,sloped,font=\scriptsize] {echo};
\draw[->] (4.15,1.55) arc[start angle=-90,end angle=-58,radius=0.8];
\node[font=\scriptsize] at (4.62,1.48) {$\theta_{\rm u}$};
\draw[->] (4.15,1.75) arc[start angle=-90,end angle=-112,radius=0.6];
\node[font=\scriptsize] at (3.83,1.62) {$\theta_{\rm t}$};
\draw[red,thick] ($(T)+(-0.18,0)$) -- ($(T)+(0.18,0)$);
\draw[red] ($(T)+(-0.18,0.05)$) ellipse (0.09 and 0.025);
\draw[red] ($(T)+(0.18,0.05)$) ellipse (0.09 and 0.025);
\fill[red] ($(T)+(0,-0.03)$) rectangle ($(T)+(0.06,0.03)$);
\node[below,font=\scriptsize] at ($(T)+(0,-0.02)$) {target};
\draw[blue,thick,rounded corners=1pt] ($(U)+(-0.09,-0.18)$) rectangle ($(U)+(0.09,0.18)$);
\node[below,font=\scriptsize] at ($(U)+(0,-0.18)$) {user};
\end{tikzpicture}
\vspace{-1mm}
\caption{A movable BD-RIS serves a user and illuminates a target whose direct links to the BS are blocked; the target echo returns to the BS over the same surface.}
\label{fig:model}
\vspace{-3mm}
\end{figure}

In Fig.~\ref{fig:model}, a BS with $N$ antennas serves a single-antenna user and senses a point target through an $M$-element BD-RIS, since the direct links are blocked. The elements slide along a linear track of length $D\ge(M-1)d_0$, with positions $\vect{x}=[x_1,\dots,x_M]^\T$ in
\begin{equation}
\mathcal X_D=\{\vect{x}:\,x_1\ge 0,\; x_{m+1}-x_m\ge d_0,\; x_M\le D\},
\label{eq:X}
\end{equation}
where $d_0\le\lambda/2$ is the minimum spacing. Let $[\vect{e}(\vect{x},\theta)]_m=e^{j2\pi x_m\sin\theta/\lambda}$. The BS--RIS link is LoS, $\vect{G}=\bar\alpha\,\vect{e}(\vect{x},\varphi)\vect{a}^\Hm$, where $\vect{a}\in\mathbb C^N$ is the BS steering vector. The RIS--target link is $\htt=\beta_{\rm t}\vect{e}(\vect{x},\theta_{\rm t})$, and the RIS--user link is $\hu=\beta_{\rm u}\vect{e}(\vect{x},\theta_{\rm u})$ unless stated otherwise. We assume $\theta_{\rm u}\neq\theta_{\rm t}$; otherwise the two channels are collinear for every placement and no trade-off exists.

The scattering matrix $\vect\Theta$ is lossless and reciprocal, $\vect\Theta=\vect\Theta^\T$ and $\vect\Theta^\Hm\vect\Theta=\vect I$, and is realized by a real symmetric reactance matrix $\vect X$ as $\vect\Theta=(j\vect X+Z_0\vect I)^{-1}(j\vect X-Z_0\vect I)$~\cite{shen2022modeling}. A group-connected BD-RIS splits the elements into groups $\mathcal G$ of size $M_{\rm g}$, making $\vect\Theta$ block diagonal; $M_{\rm g}=M$ is the FC BD-RIS and $M_{\rm g}=1$ the diagonal RIS. Unlike~\cite{xu2026movable}, where interconnected elements move as one unit, every element moves individually and the lossless network is assumed to follow it\footnote{We remark that this is an idealization: position-dependent coupling, interconnection losses and actuation are not modeled and left for future work.}. 

The BS transmits $\vect s(t)$ with covariance $\vect R$, under the power budget $\operatorname{tr}(\vect R)\le P$. The user receives $y_{\rm u}=\hu^\T\vect\Theta\vect G\vect s+n_{\rm u}$, $n_{\rm u}\sim\mathcal{CN}(0,\sigma_{\rm u}^2)$. By reciprocity, the target echo at the BS is
\begin{equation}
\vect y_{\rm s}=\xi\,\vect G^\T\vect\Theta\htt\htt^\T\vect\Theta\vect G\vect s+\vect n_{\rm s},
\label{eq:echo}
\end{equation}
with reflection coefficient $\xi$ and $\vect n_{\rm s}\sim\mathcal{CN}(\vect 0,\sigma_{\rm s}^2\vect I)$.

\begin{lemma}[Transmit design]\label{lem:tx}
For $\vect G=\bar\alpha\vect e\vect a^\Hm$ and any $\vect\Theta$ and $\vect x$, a single stream $\vect s=\vect w s$ with maximum-ratio transmission (MRT), $\vect w=\sqrt P\vect a/\|\vect a\|$, maximizes both the user SNR and the echo SNR.
\end{lemma}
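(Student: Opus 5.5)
The plan is to exploit the rank-one structure of $\vect G=\bar\alpha\vect e\vect a^\Hm$ (with $\vect e=\vect e(\vect x,\varphi)$), which collapses both SNRs into the product of a $\vect\Theta$-dependent scalar and the quadratic form $\vect a^\Hm\vect R\vect a$. Only the second factor involves the transmit design, so the argument reduces to maximizing one Rayleigh-type quantity under the power budget.

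\emph{User SNR.} Substituting $\vect G$ into $y_{\rm u}$ gives the effective scalar channel $\hu^\T\vect\Theta\vect G=\bar\alpha\,c_{\rm u}\,\vect a^\Hm$, where $c_{\rm u}=\hu^\T\vect\Theta\vect e$ depends on $\vect\Theta$ and $\vect x$ but not on $\vect R$. The received signal power is therefore
\[
|\bar\alpha|^2|c_{\rm u}|^2\,\vect a^\Hm\vect R\vect a ,
\]
and the user SNR is this quantity divided by $\sigma_{\rm u}^2$.

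\emph{Echo SNR.} By reciprocity the echo operator is
\[
\vect G^\T\vect\Theta\htt\htt^\T\vect\Theta\vect G=\bar\alpha^2\,c_{\rm t}^2\,\vect a^*\vect a^\Hm ,
\qquad c_{\rm t}=\htt^\T\vect\Theta\vect e ,
\]
where I have used $\vect\Theta=\vect\Theta^\T$ to write $\vect e^\T\vect\Theta\htt=c_{\rm t}$. Taking the echo SNR as signal energy over noise, i.e.\ the trace of the signal covariance divided by $\sigma_{\rm s}^2$, gives
\[
|\xi|^2|\bar\alpha|^4|c_{\rm t}|^4\|\vect a\|^2\,\vect a^\Hm\vect R\vect a/\sigma_{\rm s}^2 .
\]
This again depends on $\vect R$ only through $\vect a^\Hm\vect R\vect a$. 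I would note that the same conclusion holds under a matched-filter or maximum-likelihood SNR definition, since the signal always lies in the span of $\vect a^*$.

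\emph{Maximization over $\vect R$.} For any $\vect R\succeq\vect 0$ with $\operatorname{tr}(\vect R)\le P$,
\[
\vect a^\Hm\vect R\vect a\le\lambda_{\max}(\vect R)\|\vect a\|^2\le P\|\vect a\|^2 .
\]
Equality holds iff $\vect R=P\vect a\vect a^\Hm/\|\vect a\|^2$, which is exactly the single stream $\vect s=\vect w s$ with $\vect w=\sqrt P\vect a/\|\vect a\|$ and unit-power $s$. Because this maximizer does not depend on $c_{\rm u}$, $c_{\rm t}$, $\vect\Theta$ or $\vect x$, MRT simultaneously maximizes both SNRs for every surface configuration.

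\emph{Main obstacle.} The delicate step is fixing the echo-SNR metric, since the paper has not yet defined it. If the sensing metric were instead defined through the full covariance $\vect G^\T\cdots\vect R\cdots\vect G^*$, I would verify that it is still a monotone function of $\vect a^\Hm\vect R\vect a$. Rank one guarantees this, because the covariance equals $\vect a^*\vect a^\T$ times a nonnegative scalar proportional to $\vect a^\Hm\vect R\vect a$.
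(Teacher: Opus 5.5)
Your proof is correct and follows the paper's own argument: the rank-one $\vect G$ makes both the user SNR and the echo energy scale with $\vect a^\Hm\vect R\vect a$, which is at most $P\|\vect a\|^2$, with equality iff $\vect R=P\vect a\vect a^\Hm/\|\vect a\|^2$. The paper also writes $\vect R=\vect R_{\rm d}+\vect R_{\rm s}$ and counts dedicated sensing streams as interference at the user, a case your proof leaves implicit; your expression $|\bar\alpha|^2|c_{\rm u}|^2\vect a^\Hm\vect R\vect a/\sigma_{\rm u}^2$ upper-bounds that SINR and MRT (with $\vect R_{\rm s}=\vect 0$) attains it, so your conclusion still holds, but you should state this step explicitly.
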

\begin{IEEEproof}
Write $\vect R=\vect R_{\rm d}+\vect R_{\rm s}$, where $\vect R_{\rm d}$ carries the user data and $\vect R_{\rm s}$ dedicated sensing streams. The user receives the useful power $|\bar\alpha|^2|\hu^\T\vect\Theta\vect e|^2\vect a^\Hm\vect R_{\rm d}\vect a$ and the interference $|\bar\alpha|^2|\hu^\T\vect\Theta\vect e|^2\vect a^\Hm\vect R_{\rm s}\vect a$, while the echo energy in~\eqref{eq:echo} is proportional to $\vect a^\Hm\vect R\vect a$. Hence $\vect R_{\rm s}=\vect 0$, and $\vect a^\Hm\vect R_{\rm d}\vect a\le P\|\vect a\|^2$ with equality iff $\vect R_{\rm d}=P\vect a\vect a^\Hm/\|\vect a\|^2$.
\end{IEEEproof}

Lemma~\ref{lem:tx} is a direct consequence of the rank-one BS--RIS link; it shows that all spatial processing happens at the surface. The incident field is $\vect g=\bar\alpha\sqrt{P}\|\vect a\|\,\vect e(\vect x,\varphi)$, with $\|\vect g\|^2=P_{\rm r}\triangleq|\bar\alpha|^2PNM$. With the reflected field $\vect u=\vect\Theta\vect g$, the user SNR and the target illumination are
\begin{equation}
\gc=|\hu^\T\vect u|^2/\sigma_{\rm u}^2,\qquad \Gt=|\htt^\T\vect u|^2.
\label{eq:metrics}
\end{equation}
Since $\vect e^\T\vect\Theta\htt=\htt^\T\vect\Theta\vect e$ for $\vect\Theta=\vect\Theta^\T$, the echo becomes $\vect y_{\rm s}=\xi\bar\alpha^2(\htt^\T\vect\Theta\vect e)^2(\vect a^\Hm\vect w)\,\vect a^*s+\vect n_{\rm s}$. Combining with $\mathbf a/\|\mathbf a\|$ over
$T$ known snapshots gives
$\mathrm{SNR}_{\rm s}
=|\xi|^2 T G_{\rm t}^2/(P\sigma_{\rm s}^2)$.  For each placement $\vect x$, we study the achievable region $\mathcal R(\vect x)=\{(\gc,\Gt):\vect u=\vect\Theta\vect g,\ \vect\Theta\in\mathcal U\}$, where $\mathcal U$ is the feasible set of $\vect\Theta$ for the given architecture, its Pareto boundary, and the placement $\vect x\in\mathcal X_D$ that is best for it.

\section{Analysis}
\subsection{Reachable Fields and the Pareto Boundary}
\begin{lemma}\label{lem:reach}
For a group-connected BD-RIS, $\{\vect\Theta\vect g:\vect\Theta\in\mathcal U\}=\{\vect u:\|\vect u_{\mathcal G}\|=\|\vect g_{\mathcal G}\|\ \forall\mathcal G\}$. In particular, an FC BD-RIS reaches every $\vect u$ with $\|\vect u\|=\|\vect g\|$, and a diagonal RIS every $\vect u$ with $|u_m|=|g_m|$.
\end{lemma}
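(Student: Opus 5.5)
The plan is to show the two inclusions separately, working block by block. Since a group-connected $\vect\Theta$ is block diagonal with blocks $\vect\Theta_{\mathcal G}$, each symmetric and unitary, we have $\vect u_{\mathcal G}=\vect\Theta_{\mathcal G}\vect g_{\mathcal G}$, so both the constraint and the construction decouple across groups. It therefore suffices to prove the claim for one group, i.e., for the FC case with $M$ replaced by $M_{\rm g}$. The diagonal and FC statements are then the special cases $M_{\rm g}=1$ and $M_{\rm g}=M$. For $M_{\rm g}=1$, the norm condition reads $|u_m|=|g_m|$, which any phase shift $e^{j\phi_m}$ realizes.

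\emph{Necessity.} This direction is immediate: unitarity of $\vect\Theta_{\mathcal G}$ gives $\|\vect u_{\mathcal G}\|=\|\vect\Theta_{\mathcal G}\vect g_{\mathcal G}\|=\|\vect g_{\mathcal G}\|$.

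\emph{Sufficiency.} Here the key step is constructing, for given $\vect g,\vect u\in\mathbb C^{n}$ with $\|\vect u\|=\|\vect g\|$, a matrix $\vect\Theta$ with $\vect\Theta=\vect\Theta^\T$, $\vect\Theta^\Hm\vect\Theta=\vect I$ and $\vect\Theta\vect g=\vect u$. If $\vect g=\vect 0$ then $\vect u=\vect 0$ and $\vect\Theta=\vect I$ works, so assume $\vect g\neq\vect 0$. I would use the Takagi-type observation that for symmetric $\vect\Theta$ the condition $\vect\Theta\vect g=\vect u$ is equivalent to $\vect\Theta\vect u^*=\vect g^*$. Indeed, unitarity and symmetry give $\vect\Theta^{-1}=\vect\Theta^\Hm=\vect\Theta^*$, so $\vect\Theta^*\vect u=\vect g$, and conjugating yields $\vect\Theta\vect u^*=\vect g^*$. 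This suggests the following construction. Pick $\vect v$ with $\|\vect v\|=\|\vect g\|$ and phase chosen so that $\vect v$ is the "midpoint" direction, namely a vector satisfying $\vect u=\vect W\vect v^*$ and $\vect g=\vect W^\Hm\ldots$. A cleaner route is to write $\vect\Theta=\vect V\vect V^\T$ with $\vect V$ unitary, which is automatically symmetric unitary. The requirement $\vect V\vect V^\T\vect g=\vect u$ becomes $\vect V^\T\vect g=\vect V^\Hm\vect u$, i.e., $\vect V^\T\vect g=(\vect V^\T\vect u^*)^*$. So we need a unitary $\vect V$ such that $\vect z\triangleq\vect V^\T\vect g$ satisfies $\vect V^\T\vect u^*=\vect z^*$.

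The choice of $\vect z$ that makes this work is to take $\vect z$ real. Then we need a unitary $\vect U=\vect V^\T$ with $\vect U\vect g=\vect z$ and $\vect U\vect u^*=\vect z$, which requires $\vect g$ and $\vect u^*$ to share a common image. A unitary preserves inner products, so this forces $\vect g=\vect u^*$, and that generally fails. So instead I would allow $\vect z$ to be real in a two-dimensional frame: map $\vect g\mapsto\vect z_1$ and $\vect u^*\mapsto\vect z_2$ with $\vect z_2=\vect z_1^*$. Inner-product preservation then requires $\vect g^\Hm\vect u^*=\vect z_1^\Hm\vect z_1^*=\vect z_1^\Hm\overline{\vect z_1}$, i.e., $\vect u^\T\vect g$ must be matched. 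Given any complex number $c=\vect u^\T\vect g$ with $|c|\le\|\vect g\|^2$, there is a $\vect z_1\in\mathbb C^2$ with $\|\vect z_1\|=\|\vect g\|$ and $\vect z_1^\T\vect z_1=c$. To see this, take $\vect z_1=r(\cos\psi,\,j\sin\psi)e^{j\phi}$, so that $\vect z_1^\T\vect z_1=r^2\cos(2\psi)e^{2j\phi}$, which covers the whole disc of radius $r^2$. Cauchy--Schwarz gives $|c|\le\|\vect g\|^2$. The Gram matrices of $(\vect g,\vect u^*)$ and $(\vect z_1,\vect z_1^*)$ then coincide, so a unitary $\vect U$ mapping one pair to the other exists (extend by isometry on orthogonal complements; this needs $n\ge2$, and $n=1$ is the diagonal case already handled). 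Setting $\vect V=\vect U^\T$ gives $\vect\Theta=\vect U^\T\vect U$ symmetric unitary with $\vect\Theta\vect g=\vect U^\T\vect z_1=\vect U^\T\overline{\vect U\vect u^*}=\vect U^\T\vect U^*\vect u=\vect u$, because $\vect U^\T\vect U^*=(\vect U^\Hm\vect U)^*=\vect I$.

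\emph{Main obstacle.} The hard part is this Gram-matching step: verifying that the inner products $\langle\vect g,\vect u^*\rangle$ and $\langle\vect z_1,\vect z_1^*\rangle$ agree in the correct conjugation convention, and that the degenerate cases ($\vect g\parallel\vect u^*$, or $\vect g=\vect 0$) are covered by the isometry extension.

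Finally, any lossless reciprocal $\vect\Theta$ without eigenvalue $1$ is realized by the real symmetric reactance $\vect X=jZ_0(\vect\Theta+\vect I)(\vect\Theta-\vect I)^{-1}$, up to sign convention. If $\vect\Theta$ has eigenvalue $1$, I would perturb within the stabilizer of the constraint $\vect\Theta\vect g=\vect u$, or argue by closure.
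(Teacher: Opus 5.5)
Your construction is correct and is the paper's own: your $\vect U$ is the transpose of the paper's $\vect Q$ (which maps $\vect z^*\mapsto\vect g^*$ and $\vect z\mapsto\vect u$). It uses the same two-component $\vect z$ with $\|\vect z\|=\|\vect g\|$ and $\vect z^\T\vect z=\vect g^\T\vect u$, and the same $\vect\Theta=\vect Q\vect Q^\T=\vect U^\T\vect U$. The degenerate case you flag is harmless, because equal (possibly singular) Gram matrices always define an isometry between the two spans, and this isometry extends to a unitary. Your closing remark on reactances goes beyond the lemma, which concerns ideal $\vect\Theta$. Note also that perturbing within the stabilizer cannot remove the eigenvalue $1$ when, e.g., $\vect u=\vect g$; there only the limiting argument with growing reactances, which the paper mentions, applies.
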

\begin{IEEEproof}
Necessity follows from the unitarity of each block. For sufficiency, take one block and drop the index. Choose $\vect z$ with $\|\vect z\|=\|\vect g\|$ and $\vect z^\T\vect z=\vect g^\T\vect u$, e.g., $\vect z=\|\vect g\|e^{j\phi}[\cos\tau,\ j\sin\tau,\ 0,\dots,0]^\T$ with $\phi=\angle(\vect g^\T\vect u)/2$ and $\cos2\tau=|\vect g^\T\vect u|/\|\vect g\|^2\le1$. The pairs $(\vect z^*,\vect z)$ and $(\vect g^*,\vect u)$ have the same Gram matrix, so a unitary $\vect Q$ exists with $\vect Q\vect z^*=\vect g^*$ and $\vect Q\vect z=\vect u$. Then $\vect\Theta=\vect Q\vect Q^\T$ is symmetric and unitary, and $\vect\Theta\vect g=\vect Q(\vect Q^\Hm\vect g^*)^*=\vect Q\vect z=\vect u$.
\end{IEEEproof}

Lemma~\ref{lem:reach} concerns ideal scattering matrices. The reactance $\vect X=-jZ_0(\vect I+\vect\Theta)(\vect I-\vect\Theta)^{-1}$ is finite iff $1$ is not an eigenvalue of $\vect\Theta$. When it is, e.g., for $\vect u=\vect g$, the field is approached with growing reactances. Since~\eqref{eq:metrics} depends on $\vect u$ only up to a common phase, $\vect u$ can be rotated, and $\vect Q$ is free on the orthogonal complement.

Let $\vect p=\hu^*/\|\hu\|$ and $\vect q=\htt^*/\|\htt\|$, so that $\gc=\gcb|\vect p^\Hm\vect u|^2/P_{\rm r}$ and $\Gt=\Gtb|\vect q^\Hm\vect u|^2/P_{\rm r}$, where $\gcb=P_{\rm r}\|\hu\|^2/\sigma_{\rm u}^2$ and $\Gtb=P_{\rm r}\|\htt\|^2$ are the single-function maxima. Define
\begin{equation}
\rho=|\vect p^\Hm\vect q|=\frac{|\htt^\Hm\hu|}{\|\hu\|\|\htt\|},\qquad \psi=\arccos\rho\in[0,\pi/2].
\label{eq:rho}
\end{equation}

\begin{theorem}\label{thm:pareto}
For an FC BD-RIS and arbitrary $\hu,\htt$ with $0<\rho<1$, the Pareto boundary of $\mathcal R(\vect x)$ is
\begin{equation}
\big\{\big(\gcb\cos^2\vartheta,\ \Gtb\cos^2(\psi-\vartheta)\big):\ 0\le\vartheta\le\psi\big\},
\label{eq:pareto}
\end{equation}
attained by $\vect u^\star(\vartheta)=\sqrt{P_{\rm r}}\,[\sin(\psi-\vartheta)\vect p+\sin\vartheta\,\tilde{\vect q}]/\sin\psi$, with $\tilde{\vect q}=e^{-j\angle(\vect p^\Hm\vect q)}\vect q$, and $\vect\Theta$ from Lemma~\ref{lem:reach}. Formula~\eqref{eq:pareto} also holds for $\rho=0$, with $\tilde{\vect q}=\vect q$, and for $\rho=1$, where the boundary is the single point $(\gcb,\Gtb)$ attained by $\vect u=\sqrt{P_{\rm r}}\vect p$.
\end{theorem}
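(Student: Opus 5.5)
By Lemma~\ref{lem:reach}, the FC BD-RIS reaches exactly the sphere $\|\vect u\|^2=P_{\rm r}$. Hence $\mathcal R(\vect x)$ is the image of this sphere under $\vect u\mapsto(\gcb|\vect p^\Hm\vect u|^2/P_{\rm r},\ \Gtb|\vect q^\Hm\vect u|^2/P_{\rm r})$. The problem therefore reduces to a purely geometric one: describe the Pareto boundary of $\{(|\vect p^\Hm\vect v|^2,|\vect q^\Hm\vect v|^2):\|\vect v\|=1\}$ for two unit vectors with $|\vect p^\Hm\vect q|=\rho$. This is the two-direction problem of \cite{jorswieck2008complete}, and I would reprove it directly.

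\emph{Step 1 (restriction to $\mathrm{span}\{\vect p,\vect q\}$).} Decompose $\vect v=\vect v_\parallel+\vect v_\perp$ with $\vect v_\parallel\in\mathcal S=\mathrm{span}\{\vect p,\vect q\}$. Both inner products depend only on $\vect v_\parallel$. If $\vect v_\perp\neq\vect 0$, rescale to $\vect v_\parallel/\|\vect v_\parallel\|$; this multiplies both metrics by $1/\|\vect v_\parallel\|^2>1$, which is a strict improvement unless both vanish. If $\vect v_\parallel=\vect 0$, the point $(0,0)$ is dominated. So Pareto-optimal $\vect v$ are unit vectors in $\mathcal S$.

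\emph{Step 2 (phase alignment).} Replace $\vect q$ by $\tilde{\vect q}$, so that $\vect p^\Hm\tilde{\vect q}=\rho=\cos\psi$ is real. Write $\vect v=a\vect p+b\tilde{\vect q}$ with complex $a,b$. The constraint is $|a|^2+|b|^2+2\rho\Re(a^*b)=1$, and the metrics are $|a+\rho b|^2$ and $|\rho a+b|^2$. Fixing $|a|,|b|$, both metrics are maximized and the norm is minimized (or, after rescaling, everything improves) when $a,b$ are real and of the same sign. So I restrict to real $a,b\ge0$ up to a common phase.

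\emph{Step 3 (real plane parametrization).} In the real 2D plane spanned by $\vect p,\tilde{\vect q}$, which meet at angle $\psi$, a unit vector at angle $\vartheta$ from $\vect p$ gives $|\vect p^\Hm\vect v|=|\cos\vartheta|$ and $|\tilde{\vect q}^\Hm\vect v|=|\cos(\psi-\vartheta)|$. By the law of sines, the unit vector at angle $\vartheta$ between them is exactly $[\sin(\psi-\vartheta)\vect p+\sin\vartheta\,\tilde{\vect q}]/\sin\psi$, which matches $\vect u^\star$ after scaling by $\sqrt{P_{\rm r}}$. Angles outside $[0,\psi]$ are dominated by the nearer endpoint: moving toward the interval increases both cosines. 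Within $[0,\psi]$, one metric is strictly decreasing and the other strictly increasing in $\vartheta$, so every point is Pareto-optimal. This yields \eqref{eq:pareto}, and $\vect\Theta$ is then obtained from Lemma~\ref{lem:reach}.

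\emph{Step 4 (edge cases).} For $\rho=0$, the phase of $\vect p^\Hm\vect q$ is undefined, so take $\tilde{\vect q}=\vect q$ with $\psi=\pi/2$; the argument above is unchanged. For $\rho=1$, $\vect q$ is a phase multiple of $\vect p$, so $\vect u=\sqrt{P_{\rm r}}\vect p$ attains both maxima simultaneously by Cauchy--Schwarz, and the boundary collapses to a point.

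The main obstacle is Step 2: justifying rigorously that complex coefficients cannot do better than real same-sign ones. I would handle it by noting that for $|a|,|b|$ fixed, $\Re(a^*b)$ appears with a plus sign in both numerators $|a+\rho b|^2$ and $|\rho a+b|^2$ and also in the norm. I would then compare normalized ratios. Alternatively, I would argue via the complex angle: $|\vect p^\Hm\vect v|=\cos\angle(\vect p,\vect v)$ with the Hermitian angle, and the Hermitian angle satisfies the triangle inequality, so $\angle(\vect v,\vect p)+\angle(\vect v,\vect q)\ge\psi$. This directly gives $|\tilde{\vect q}^\Hm\vect v|^2\le\cos^2(\psi-\vartheta)$ whenever $|\vect p^\Hm\vect v|=\cos\vartheta$, which is cleaner and the route I would try first.
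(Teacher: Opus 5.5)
Your outline reaches the right boundary by a different route from the paper, and one claim in Step~2 is false as written.

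\textbf{The paper's route.} The paper never restricts to $\mathrm{span}\{\vect{p},\vect{q}\}$ or to real coefficients. It fixes $|\vect{p}^\Hm\vect{u}|^2=P_{\rm r}\cos^2\vartheta$ and writes $\vect{u}=\sqrt{P_{\rm r}}(\cos\vartheta\,e^{j\nu}\vect{p}+\sin\vartheta\,\vect{r})$ with an arbitrary unit $\vect{r}\perp\vect{p}$ in $\mathbb{C}^M$. It expands $\tilde{\vect{q}}=\rho\vect{p}+\sqrt{1-\rho^2}\,\vect{p}_\perp$. It then bounds $|\tilde{\vect{q}}^\Hm\vect{u}|\le\sqrt{P_{\rm r}}\cos(\psi-\vartheta)$ using only the triangle inequality and $|\vect{p}_\perp^\Hm\vect{r}|\le 1$. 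That single inequality does the work of your Steps~1--3 at once: components outside the span, phase misalignment, and points off the arc. It also gives uniqueness of $\vect{u}^\star(\vartheta)$ up to a common phase. The orthonormal frame $\{\vect{p},\vect{p}_\perp\}$ is what avoids your difficulty. Your oblique frame $\{\vect{p},\tilde{\vect{q}}\}$ couples the norm to the phase of $a^*b$. What your route buys is the geometric picture: the real plane and the law of sines explain the form of $\vect{u}^\star(\vartheta)$, which the paper states without motivation.

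\textbf{The error in Step~2 and its fix.} You claim that aligning phases minimizes the norm; it does the opposite. With $|a|,|b|$ fixed and $c=\Re\{a^*b\}$, $\|\vect{v}\|^2=|a|^2+|b|^2+2\rho c$ increases with $c$, exactly like both numerators. So the plain rescaling argument does not close. Your fallback of comparing normalized ratios does work. Differentiating in $c$, the user ratio has derivative $2\rho(1-\rho^2)|b|^2/\|\vect{v}\|^4\ge 0$ and the target ratio has derivative $2\rho(1-\rho^2)|a|^2/\|\vect{v}\|^4\ge 0$. Hence both normalized metrics are maximized at $c=|a||b|$, and every feasible point is weakly dominated by a point on the arc $[0,\psi]$.

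With this in place, your remark in Step~3 about angles outside $[0,\psi]$ is redundant. As stated it is also inaccurate: for $\rho=0$, the angles in $(\pi/2,\pi)$ reproduce interior arc points and are not dominated by either endpoint.

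\textbf{Your Hermitian-angle alternative.} It is valid: the Hermitian angle is the Fubini--Study distance and obeys the triangle inequality. However, proving that inequality in the form you need is precisely the paper's one-line bound. Citing it therefore outsources the key step rather than avoiding it.
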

\begin{IEEEproof}
Fix $|\vect p^\Hm\vect u|^2=P_{\rm r}\cos^2\vartheta$ and write $\vect u=\sqrt{P_{\rm r}}(\cos\vartheta\,e^{j\nu}\vect p+\sin\vartheta\,\vect r)$ with $\vect r\perp\vect p$, $\|\vect r\|=1$. Write $\tilde{\vect q}=\rho\vect p+\sqrt{1-\rho^2}\,\vect p_\perp$ with $\vect p_\perp\perp\vect p$ of unit norm. Then $|\tilde{\vect q}^\Hm\vect u|/\sqrt{P_{\rm r}}\le\rho\cos\vartheta+\sqrt{1-\rho^2}\sin\vartheta=\cos(\psi-\vartheta)$, with equality iff $\vect r=e^{j\nu}\vect p_\perp$, i.e., $\vect u=e^{j\nu}\vect u^\star(\vartheta)$. Points with $\vartheta>\psi$ are dominated by $\vartheta=\psi$.
\end{IEEEproof}

This is the two-direction structure of the MISO Pareto boundary~\cite{jorswieck2008complete}. The new element is Lemma~\ref{lem:reach}, which makes a passive FC BD-RIS behave like a fully digital transmitter.

\begin{corollary}\label{cor:closed}
(a) For $\eta\in(0,1]$ and a requirement $\gc\ge\eta\gcb$, the maximum illumination is $\Gt^\star=\Gtb\cos^2\!\big([\psi-\arccos\sqrt\eta\,]^+\big)$, which equals $\Gtb$ iff $\rho\ge\sqrt\eta$. (b) $\mathcal R(\vect x)$ is convex, being a scaled joint numerical range of $\vect p\vect p^\Hm$ and $\vect q\vect q^\Hm$~\cite{horn1991topics}, and $\max\, w\gc/\gcb+(1-w)\Gt/\Gtb=\tfrac12\big[1+\sqrt{1-4w(1-w)(1-\rho^2)}\big]$. Both increase with $\rho$.
\end{corollary}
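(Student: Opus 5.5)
Both parts follow from Theorem~\ref{thm:pareto} together with Lemma~\ref{lem:reach}. Throughout we use the normalized coordinates $a=\gc/\gcb=|\vect p^\Hm\vect u|^2/P_{\rm r}$ and $b=\Gt/\Gtb=|\vect q^\Hm\vect u|^2/P_{\rm r}$.

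For (a), the plan is to maximize over the boundary~\eqref{eq:pareto}. The constraint $a\ge\eta$ reads $\cos^2\vartheta\ge\eta$, i.e., $\vartheta\le\vartheta_\eta\triangleq\arccos\sqrt\eta$, since $\vartheta\in[0,\pi/2]$.

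Any feasible point is dominated by a boundary point that still satisfies the constraint, so it suffices to maximize $\cos^2(\psi-\vartheta)$ over $0\le\vartheta\le\min(\psi,\vartheta_\eta)$. On this interval $\psi-\vartheta\in[0,\pi/2]$, so the objective is increasing in $\vartheta$ and is maximized at $\vartheta=\min(\psi,\vartheta_\eta)$. This gives $\cos^2([\psi-\vartheta_\eta]^+)$.

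Equality with $\Gtb$ holds iff $\psi\le\vartheta_\eta$. Since $\arccos$ is decreasing, this is equivalent to $\rho\ge\sqrt\eta$. The endpoint cases $\rho=0$ and $\rho=1$ are covered by the corresponding clauses of Theorem~\ref{thm:pareto}.

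For (b), convexity follows from Lemma~\ref{lem:reach}. It gives the normalized region as $\{(|\vect p^\Hm\vect v|^2,|\vect q^\Hm\vect v|^2):\|\vect v\|=1\}$, which is the joint numerical range of the Hermitian matrices $\vect p\vect p^\Hm$ and $\vect q\vect q^\Hm$. This set is convex by the Toeplitz--Hausdorff theorem. Scaling by $\gcb$ and $\Gtb$ preserves convexity.

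For the weighted sum, the value is
\[
\max_{\|\vect v\|=1}\vect v^\Hm\big(w\vect p\vect p^\Hm+(1-w)\vect q\vect q^\Hm\big)\vect v=\lambda_{\max}(\vect A),\qquad \vect A=w\vect p\vect p^\Hm+(1-w)\vect q\vect q^\Hm.
\]
The matrix $\vect A$ has rank at most two and acts on $\mathrm{span}(\vect p,\vect q)$. Its nonzero eigenvalues therefore satisfy $\lambda_1+\lambda_2=\operatorname{tr}\vect A=1$ and $\lambda_1\lambda_2=w(1-w)(1-\rho^2)$. The product follows from the Gram determinant, or equivalently from $\operatorname{tr}(\vect A^2)=w^2+(1-w)^2+2w(1-w)\rho^2$.

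Solving the resulting quadratic gives $\lambda_{\max}=\tfrac12[1+\sqrt{1-4w(1-w)(1-\rho^2)}]$. Alternatively, one can maximize $w\cos^2\vartheta+(1-w)\cos^2(\psi-\vartheta)$ directly using Theorem~\ref{thm:pareto}. The eigenvalue route avoids trigonometric bookkeeping.

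Monotonicity in $\rho$ is immediate from the closed forms:
\begin{itemize}
\item In (a), $\psi=\arccos\rho$ decreases in $\rho$, so $[\psi-\vartheta_\eta]^+$ decreases and its $\cos^2$ increases.
\item In (b), the discriminant $1-4w(1-w)(1-\rho^2)$ increases in $\rho$.
\end{itemize}

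The only delicate step is justifying in (a) that restricting to the Pareto boundary loses nothing under the constraint. This holds because, for any feasible point with $a\ge\eta$, the dominating boundary point has a larger $a$ and so remains feasible. Beyond this, the argument is routine.
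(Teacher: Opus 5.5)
Your proposal is correct and follows essentially the paper's proof. In (a) you restrict to the boundary~\eqref{eq:pareto}, turn the requirement into $\vartheta\le\arccos\sqrt\eta$, and use that $\Gt$ increases on $[0,\psi]$; in (b) you take the largest eigenvalue of the rank-two matrix $w\vect p\vect p^\Hm+(1-w)\vect q\vect q^\Hm$, with trace $1$ and determinant $w(1-w)(1-\rho^2)$. Your Toeplitz--Hausdorff argument for convexity and your justification of the domination step only make explicit what the paper leaves to the citation and to Theorem~\ref{thm:pareto}.
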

\begin{IEEEproof}
(a) $\gc\ge\eta\gcb$ means $\vartheta\le\arccos\sqrt\eta$ in~\eqref{eq:pareto}, and $\Gt$ increases with $\vartheta$ on $[0,\psi]$. (b) The maximum is the largest eigenvalue of $w\vect p\vect p^\Hm+(1-w)\vect q\vect q^\Hm$, whose nonzero eigenvalues are those of a $2\times2$ Gram matrix with trace $1$ and determinant $w(1-w)(1-\rho^2)$.
\end{IEEEproof}

For group-connected and diagonal RISs, Lemma~\ref{lem:reach} turns the weighted sum into $\max\vect u^\Hm\vect B_w\vect B_w^\Hm\vect u$ over a product of spheres, with $\vect B_w=[\sqrt w\,\vect p,\ \sqrt{1-w}\,\vect q]$.
\begin{proposition}\label{prop:group}
For any group partition,
\begin{equation}
\begin{aligned}
&\max_{\|\vect u_{\mathcal G}\|=\|\vect g_{\mathcal G}\|\,\forall\mathcal G}\vect u^\Hm\vect B_w\vect B_w^\Hm\vect u\\
&\qquad=\max_{\vect z\in\mathbb C^2,\,\|\vect z\|=1}\Big(\sum_{\mathcal G}\|\vect g_{\mathcal G}\|\,\|[\vect B_w\vect z]_{\mathcal G}\|\Big)^2,
\end{aligned}
\label{eq:prop1}
\end{equation}
and a maximizer is $\vect u_{\mathcal G}=\|\vect g_{\mathcal G}\|\vect v_{\mathcal G}/\|\vect v_{\mathcal G}\|$ with $\vect v=\vect B_w\vect z^\star$, where any $\vect u_{\mathcal G}$ with $\|\vect u_{\mathcal G}\|=\|\vect g_{\mathcal G}\|$ may be used if $\vect v_{\mathcal G}=\vect 0$. Since $\vect z$ has two real degrees of freedom up to a common phase, every supported point of the boundary follows from a two-dimensional search.
\end{proposition}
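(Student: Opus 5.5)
The plan is to write the quadratic form as a squared norm and exchange the two maximizations. Since $\vect B_w\vect B_w^\Hm$ has rank at most two, we have $\vect u^\Hm\vect B_w\vect B_w^\Hm\vect u=\|\vect B_w^\Hm\vect u\|^2=\max_{\|\vect z\|=1}|\vect z^\Hm\vect B_w^\Hm\vect u|^2$ by Cauchy--Schwarz in $\mathbb C^2$. The left-hand side of~\eqref{eq:prop1} then becomes a joint maximum over $\vect u$ in the product of spheres and over $\vect z$ on the unit sphere of $\mathbb C^2$. Both sets are compact, so maxima exist and the order of maximization can be swapped freely.

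For fixed $\vect z$, set $\vect v=\vect B_w\vect z$. The inner problem $\max|\vect v^\Hm\vect u|$ separates over groups:
\begin{equation*}
|\vect v^\Hm\vect u|\le\sum_{\mathcal G}|\vect v_{\mathcal G}^\Hm\vect u_{\mathcal G}|\le\sum_{\mathcal G}\|\vect v_{\mathcal G}\|\,\|\vect g_{\mathcal G}\|,
\end{equation*}
where the first step is the triangle inequality and the second is Cauchy--Schwarz per block. Both inequalities become equalities for $\vect u_{\mathcal G}=\|\vect g_{\mathcal G}\|\vect v_{\mathcal G}/\|\vect v_{\mathcal G}\|$. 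This choice aligns every block with $\vect v_{\mathcal G}$ and makes all terms real and nonnegative, so the phases add coherently. When $\vect v_{\mathcal G}=\vect 0$ the block contributes zero regardless of $\vect u_{\mathcal G}$, so any feasible $\vect u_{\mathcal G}$ may be used. Squaring and maximizing over $\vect z$ gives the right-hand side of~\eqref{eq:prop1}.

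Next, I will check that the stated $\vect u$ built from $\vect z^\star$ is optimal for the original problem. Its value is at least $|\vect z^{\star\Hm}\vect B_w^\Hm\vect u|^2=(\sum_{\mathcal G}\|\vect g_{\mathcal G}\|\|[\vect B_w\vect z^\star]_{\mathcal G}\|)^2$, which equals the optimum, and it can never exceed the optimum. By Lemma~\ref{lem:reach}, such a $\vect u$ is reachable by a block-diagonal $\vect\Theta$.

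Finally, the objective on the right is invariant to $\vect z\mapsto e^{j\phi}\vect z$. After fixing the phase, the unit sphere in $\mathbb C^2$ modulo a common phase is parametrized by two real parameters, e.g.\ $\vect z=[\cos\chi,\ \sin\chi\,e^{j\omega}]^\T$. Supported boundary points are, by definition, maximizers of weighted sums for some $w$, and are obtained by evaluating $\gc$ and $\Gt$ at the resulting $\vect u$. The main subtlety is the exchange-of-maxima and tightness argument with the zero-block case; I expect it to be routine once written as the joint maximum above.
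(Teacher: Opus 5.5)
Your proposal is correct and follows the paper's proof: you write the form as $\|\vect B_w^\Hm\vect u\|^2=\max_{\|\vect z\|=1}|(\vect B_w\vect z)^\Hm\vect u|^2$, bound the inner problem blockwise by Cauchy--Schwarz with equality at the aligned $\vect u_{\mathcal G}$, and exchange the two maximizations. Your explicit triangle-inequality step and your check that the $\vect u$ built from $\vect z^\star$ attains the optimum fill in details the paper leaves implicit.
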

\begin{IEEEproof}
$\vect u^\Hm\vect B_w\vect B_w^\Hm\vect u=\|\vect B_w^\Hm\vect u\|^2=\max_{\|\vect z\|=1}|(\vect B_w\vect z)^\Hm\vect u|^2$. For fixed $\vect z$, the Cauchy--Schwarz inequality on each block gives $|\sum_{\mathcal G}\vect v_{\mathcal G}^\Hm\vect u_{\mathcal G}|\le\sum_{\mathcal G}\|\vect v_{\mathcal G}\|\|\vect g_{\mathcal G}\|$, with equality at the stated $\vect u$. Exchanging the two maximizations gives~\eqref{eq:prop1}.
\end{IEEEproof}

When the region is nonconvex, as for the diagonal RIS in Fig.~\ref{fig:pareto}, its unsupported points are computed numerically: we maximize $\Gt$ under $\gc\ge\eta\gcb$ over the same family $\vect v=\vect B_w\vect z$ with local refinement, and check the result against random-restart SQP over all $M$ phases. We make no global-optimality claim for these points.

\subsection{Line of Sight: Positions Act Only Through $\rho$}
\begin{theorem}\label{thm:sep}
Under LoS, $\|\vect g\|$, $\|\hu\|$ and $\|\htt\|$ do not depend on $\vect x$, and
\begin{equation}
\rho(\vect x)=\frac1M\Big|\sum_{m=1}^M e^{j2\pi\Ds x_m/\lambda}\Big|,\qquad \Ds=\sin\theta_{\rm u}-\sin\theta_{\rm t},
\label{eq:rhoLoS}
\end{equation}
independently of the BS--RIS geometry. For an FC BD-RIS, if $\rho(\vect x)\le\rho(\vect x')$, every point of $\mathcal R(\vect x)$ is weakly Pareto-dominated by a point of $\mathcal R(\vect x')$. Hence every criterion that is nondecreasing in $\gc$ and $\Gt$ is maximized by
\begin{equation}
\vect x^\star=\arg\max_{\vect x\in\mathcal X_D}\ \rho(\vect x).
\label{eq:place}
\end{equation}
\end{theorem}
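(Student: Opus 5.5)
The plan is to compute everything directly from the LoS steering-vector model and then invoke Theorem~\ref{thm:pareto} and Corollary~\ref{cor:closed}.

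\textbf{Norms and $\rho$.} Every entry of $\vect e(\vect x,\theta)$ has unit modulus, so $\|\vect e(\vect x,\theta)\|^2=M$ for every $\vect x$ and $\theta$. This gives $\|\hu\|=|\beta_{\rm u}|\sqrt M$, $\|\htt\|=|\beta_{\rm t}|\sqrt M$ and $\|\vect g\|^2=P_{\rm r}=|\bar\alpha|^2PNM$, all independent of $\vect x$. Hence $\gcb$ and $\Gtb$ do not depend on the placement. Next, compute
\[
\htt^\Hm\hu=\beta_{\rm t}^*\beta_{\rm u}\sum_m e^{j2\pi x_m(\sin\theta_{\rm u}-\sin\theta_{\rm t})/\lambda}.
\]
Dividing its modulus by $\|\hu\|\|\htt\|=|\beta_{\rm u}\beta_{\rm t}|M$ gives \eqref{eq:rhoLoS}. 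The angle $\varphi$ and the vector $\vect a$ never enter, which is the claimed independence of the BS--RIS geometry. This holds because Lemma~\ref{lem:reach} says that only $\|\vect g\|$ matters for an FC BD-RIS.

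\textbf{Dominance.} By Corollary~\ref{cor:closed}(b), $\mathcal R(\vect x)$ is convex and compact. It lies in the box $[0,\gcb]\times[0,\Gtb]$, and it is closed downward in the relevant sense. So every point of $\mathcal R(\vect x)$ is weakly dominated by a point on its Pareto boundary \eqref{eq:pareto}. It then suffices to show that each boundary point $(\gcb\cos^2\vartheta,\Gtb\cos^2(\psi-\vartheta))$, $\vartheta\in[0,\psi]$, is dominated by a point of $\mathcal R(\vect x')$, whose parameter is $\psi'=\arccos\rho(\vect x')\le\psi$.

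If $\vartheta\le\psi'$, take the same $\vartheta$ in \eqref{eq:pareto} for $\vect x'$. The first coordinate is unchanged. For the second, $0\le\psi'-\vartheta\le\psi-\vartheta\le\pi/2$, so $\cos^2(\psi'-\vartheta)\ge\cos^2(\psi-\vartheta)$. If $\vartheta>\psi'$, take $\vartheta'=\psi'$ instead. This gives $(\gcb\cos^2\psi',\Gtb)$, which dominates because $\cos^2\psi'\ge\cos^2\vartheta$ and $\Gtb$ is the maximum illumination.

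The degenerate cases $\rho\in\{0,1\}$ are covered by the extended statement of Theorem~\ref{thm:pareto}. At $\rho'=1$ the single point $(\gcb,\Gtb)$ dominates everything. Alternatively, all cases follow at once from Corollary~\ref{cor:closed}(a): the maximum $\Gt$ under $\gc\ge\eta\gcb$ is nondecreasing in $\rho$ for every $\eta$.

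\textbf{Optimal placement.} A criterion $f$ that is nondecreasing in both metrics satisfies $\max_{\mathcal R(\vect x)}f\le\max_{\mathcal R(\vect x')}f$ whenever $\rho(\vect x)\le\rho(\vect x')$. Therefore any maximizer of $\rho$ over $\mathcal X_D$ is optimal. Such a maximizer exists because $\mathcal X_D$ is compact and $\rho$ is continuous.

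The only step needing care is the claim that points in the interior of $\mathcal R(\vect x)$ are dominated by boundary points, together with the endpoint handling $\vartheta>\psi'$. The proof of Theorem~\ref{thm:pareto} already gives the upper bound $|\tilde{\vect q}^\Hm\vect u|^2\le P_{\rm r}\cos^2(\psi-\vartheta)$ for any $\vect u$ with $\gc=\gcb\cos^2\vartheta$. Applying that bound directly to an arbitrary point settles it without a separate convexity argument.
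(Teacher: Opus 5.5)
Your proposal is correct and follows the paper's proof. Constant-modulus entries fix the norms and give \eqref{eq:rhoLoS}, and each boundary point $\vartheta$ for $\vect x$ is dominated by the boundary point $\min(\vartheta,\psi')$ for $\vect x'$. One small caveat: $\mathcal R(\vect x)$ is not downward closed (for $\rho=1$ it is a segment), so drop that remark and rely, as you already do at the end, on the bound from the proof of Theorem~\ref{thm:pareto} (or on compactness) to dominate interior points by boundary points.
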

\begin{IEEEproof}
All entries have constant modulus, so the norms are fixed and~\eqref{eq:rhoLoS} follows from~\eqref{eq:rho}. Every point of $\mathcal R(\vect x)$ is dominated by a point of~\eqref{eq:pareto}, and for $\psi'\le\psi$ the point $\vartheta$ of~\eqref{eq:pareto} is dominated by the point $\min(\vartheta,\psi')$ of the boundary for $\psi'$.
\end{IEEEproof}

Dominance cannot be replaced by set inclusion: for $\rho'=1$, $\mathcal R(\vect x')$ is the segment $\{(t\gcb,t\Gtb):0\le t\le1\}$, which does not contain $(\gcb,\rho^2\Gtb)$.

\begin{theorem}\label{thm:lattice}
Let $\Lambda=\lambda/|\Ds|$. Then $\rho(\vect x)=1$ iff $x_m\in x_1+\Lambda\mathbb Z$ for all $m$, and
\begin{equation}
\max_{\vect x\in\mathcal X_D}\rho(\vect x)=1\iff D\ge D^\star=\frac{(M-1)\lambda}{|\Ds|},
\label{eq:Dstar}
\end{equation}
attained by $x_m=(m-1)\Lambda$. At this placement, the Pareto boundary of the FC BD-RIS reduces to the single point $(\gcb,\Gtb)$: the user and the target are served as if each were alone.
\end{theorem}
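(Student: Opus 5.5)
The plan has three parts: characterize $\rho=1$, solve the placement feasibility question on $\mathcal X_D$, and read off the Pareto consequence from Theorem~\ref{thm:pareto}.

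\textbf{Step 1: the characterization of $\rho=1$.} Write $c_m=e^{j2\pi\Ds x_m/\lambda}$, so that by Theorem~\ref{thm:sep} $\rho(\vect x)=\frac1M|\sum_m c_m|$. Each $c_m$ has unit modulus. The triangle inequality gives $|\sum_m c_m|\le M$, with equality iff all $c_m$ share a common phase, i.e., $c_m=c_1$ for all $m$. This holds iff $2\pi\Ds(x_m-x_1)/\lambda\in2\pi\mathbb Z$, which is equivalent to $x_m-x_1\in(\lambda/|\Ds|)\mathbb Z=\Lambda\mathbb Z$. Since $\theta_{\rm u}\neq\theta_{\rm t}$, and the angles lie in the usual range $(-\pi/2,\pi/2)$ so that $\sin$ is injective there, we have $\Ds\neq0$ and $\Lambda$ is finite.

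\textbf{Step 2: feasibility on $\mathcal X_D$.} For sufficiency, if $D\ge D^\star=(M-1)\Lambda$, take $x_m=(m-1)\Lambda$. Then $x_1=0$ and $x_M=D^\star\le D$. The spacing is $\Lambda=\lambda/|\Ds|\ge\lambda/2\ge d_0$, because $|\Ds|\le2$. So $\vect x\in\mathcal X_D$, and by Step 1 $\rho=1$. For necessity, suppose $\rho(\vect x)=1$ with $\vect x\in\mathcal X_D$. The positions are strictly increasing, since $x_{m+1}-x_m\ge d_0>0$, and by Step 1 they lie in $x_1+\Lambda\mathbb Z$. Hence consecutive gaps are positive integer multiples of $\Lambda$, each at least $\Lambda$. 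Therefore $x_M-x_1\ge(M-1)\Lambda$, and so $D\ge x_M\ge x_1+(M-1)\Lambda\ge D^\star$. Finally, the maximum in~\eqref{eq:Dstar} is indeed attained rather than merely a supremum, because $\mathcal X_D$ is compact and $\rho$ is continuous.

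\textbf{Step 3: the Pareto consequence.} At $\rho=1$, the last clause of Theorem~\ref{thm:pareto} states that the boundary is the single point $(\gcb,\Gtb)$, attained by $\vect u=\sqrt{P_{\rm r}}\vect p$. This field is reachable by Lemma~\ref{lem:reach}. By Theorem~\ref{thm:sep}, $\gcb$ and $\Gtb$ do not depend on $\vect x$, so they are the single-function optima, and each function is served as if it were alone.

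The argument is elementary throughout. The only delicate points are in the necessity step: one must use the strict ordering of positions so that each gap counts as at least one full $\Lambda$, and one must confirm $\Lambda\ge d_0$ so that the lattice placement respects the minimum-spacing constraint. I would check the gap counting most carefully.
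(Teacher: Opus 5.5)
Your proof is correct and follows the paper's own argument. It uses equality in the triangle inequality for the LoS correlation, then feasibility of $M$ ordered lattice points via $\Lambda\ge\lambda/2\ge d_0$ and span $(M-1)\Lambda\le D$, and finally the $\rho=1$ case of Theorem~\ref{thm:pareto}. Beyond that, you spell out two details the paper leaves implicit: strictly increasing lattice positions force each gap to be at least $\Lambda$, and the maximum over the compact set $\mathcal X_D$ is attained.
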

\begin{IEEEproof}
Equality in the triangle inequality for~\eqref{eq:rhoLoS} holds iff all phases $2\pi\Ds x_m/\lambda$ coincide modulo $2\pi$. Since $|\Ds|\le2$, $\Lambda\ge\lambda/2\ge d_0$, so $M$ ordered lattice points are feasible iff their span $(M-1)\Lambda$ fits in $D$. With $\psi=0$, Theorem~\ref{thm:pareto} gives $(\gcb,\Gtb)$.
\end{IEEEproof}

\begin{proposition}[Clustered lattice]\label{prop:cluster}
Let $c$ divide $M$ with $cd_0\le\Lambda$. Placing $M/c$ clusters of $c$ elements, spaced $d_0$ within each cluster, at the lattice points $n\Lambda$ achieves
\begin{equation}
\rho_c=\left|\frac{\sin(c\pi\Ds d_0/\lambda)}{c\sin(\pi\Ds d_0/\lambda)}\right|\ \ \text{with}\ \ D_c=\Big(\frac Mc-1\Big)\Lambda+(c-1)d_0 .
\label{eq:cluster}
\end{equation}
Hence $\max_{\vect x\in\mathcal X_D}\rho(\vect x)\ge\rho_c$ for all $D\ge D_c$, and $c=1$ recovers Theorem~\ref{thm:lattice}.
\end{proposition}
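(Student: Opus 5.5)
The plan is to construct the placement explicitly, check that it lies in $\mathcal X_D$, and then evaluate $\rho$ through~\eqref{eq:rhoLoS}. We write $K=M/c$ and index the elements by $(n,k)$, with $n=0,\dots,K-1$ and $k=0,\dots,c-1$. The positions are $x_{n,k}=n\Lambda+k d_0$, ordered lexicographically.

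\emph{Feasibility.} Within a cluster, consecutive gaps equal $d_0$. Between the last element of cluster $n$ and the first element of cluster $n+1$, the gap is $\Lambda-(c-1)d_0$. This is $\ge d_0$ exactly when $cd_0\le\Lambda$, which is the hypothesis. The first position is $x_{0,0}=0$ and the last is $(K-1)\Lambda+(c-1)d_0=D_c$. Hence the placement is in $\mathcal X_D$ for every $D\ge D_c$.

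\emph{Evaluation of $\rho$.} Since $|\Ds|\Lambda/\lambda=1$, every phase satisfies $2\pi\Ds n\Lambda/\lambda\in2\pi\mathbb Z$. The sum in~\eqref{eq:rhoLoS} therefore factorizes as
\begin{equation*}
\sum_{n=0}^{K-1}\sum_{k=0}^{c-1}e^{j2\pi\Ds(n\Lambda+kd_0)/\lambda}=K\sum_{k=0}^{c-1}e^{j2\pi\Ds kd_0/\lambda}.
\end{equation*}
The finite geometric sum has modulus $|\sin(c\pi\Ds d_0/\lambda)/\sin(\pi\Ds d_0/\lambda)|$. Dividing by $M=Kc$ gives $\rho_c$. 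The denominator vanishes only if $\Ds d_0/\lambda\in\mathbb Z$. This cannot happen, because $0<|\Ds|d_0/\lambda\le 1$, with equality only when $|\Ds|=2$ and $d_0=\lambda/2$. In that edge case $\Lambda=d_0$, so the condition $cd_0\le\Lambda$ forces $c=1$, and we interpret the ratio as its limit $1$.

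\emph{Conclusions.} Because $\rho_c$ is attained by a feasible point, $\max_{\vect x\in\mathcal X_D}\rho(\vect x)\ge\rho_c$ for all $D\ge D_c$. For $c=1$ we get $\rho_1=1$ and $D_1=(M-1)\Lambda=D^\star$, which matches Theorem~\ref{thm:lattice}.

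The only delicate points are the inter-cluster gap condition and the degenerate denominator. Both reduce to the hypothesis $cd_0\le\Lambda$ together with $|\Ds|\le2$, so I do not expect a real obstacle.
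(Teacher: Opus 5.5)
Your proposal is correct and follows essentially the same route as the paper's proof. Both factor the sum in \eqref{eq:rhoLoS} into a lattice factor of unit modulus and the cluster array factor, and both use $cd_0\le\Lambda$ to guarantee the inter-cluster gap $\Lambda-(c-1)d_0\ge d_0$; you add an explicit feasibility check and handle the vanishing denominator at $|\Ds|=2$, $d_0=\lambda/2$, which the paper leaves implicit.
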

\begin{IEEEproof}
The sum in~\eqref{eq:rhoLoS} factorizes into the cluster array factor, which gives~\eqref{eq:cluster}, and the lattice factor, which has unit modulus. Clusters do not overlap because $cd_0\le\Lambda$.
\end{IEEEproof}

\begin{corollary}[Diagonal RIS]\label{cor:moving}
Under LoS, $|g_m|^2=P_{\rm r}/M$, so a diagonal RIS realizes every constant-modulus $\vect u$. (a) The endpoints of~\eqref{eq:pareto}, $\vect u\propto\vect p$ and $\vect u\propto\tilde{\vect q}$, have constant modulus, so a diagonal RIS attains them for any $\vect x$. (b) For $D\ge D^\star$, a movable diagonal RIS on the lattice attains $(\gcb,\Gtb)$, and interconnections bring no further gain.
\end{corollary}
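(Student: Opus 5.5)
The plan is to reduce everything to Lemma~\ref{lem:reach} specialized to singleton groups, and then verify constant modulus of the relevant fields.

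\emph{Preamble: diagonal reachability.} Under the LoS model, $\vect g=\bar\alpha\sqrt P\|\vect a\|\,\vect e(\vect x,\varphi)$. Every entry of $\vect e$ has unit modulus, so $|g_m|^2=|\bar\alpha|^2PN=P_{\rm r}/M$ for all $m$. Lemma~\ref{lem:reach} with $M_{\rm g}=1$ states that a diagonal RIS reaches exactly the fields with $|u_m|=|g_m|$. This set is precisely the constant-modulus vectors with $|u_m|^2=P_{\rm r}/M$, which proves the preamble.

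\emph{Part (a).} I will check that the two endpoint fields of Theorem~\ref{thm:pareto} lie in this set.
\begin{itemize}
\item At $\vartheta=0$ the field is $\vect u^\star(0)=\sqrt{P_{\rm r}}\,\vect p$.
\item At $\vartheta=\psi$ it is $\sqrt{P_{\rm r}}\,\tilde{\vect q}$.
\end{itemize}
Under LoS, $\vect p=\hu^*/\|\hu\|=\vect e(\vect x,\theta_{\rm u})^*/\sqrt M$, since $\|\hu\|=|\beta_{\rm u}|\sqrt M$ up to a unit-modulus phase from $\beta_{\rm u}^*$. Hence every entry of $\sqrt{P_{\rm r}}\,\vect p$ has modulus $\sqrt{P_{\rm r}/M}$. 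The same holds for $\tilde{\vect q}$, because $e^{-j\angle(\vect p^\Hm\vect q)}$ is a global phase.

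The degenerate cases need a word. When $\rho=0$ we have $\tilde{\vect q}=\vect q$. When $\rho=1$ the single point is attained by $\sqrt{P_{\rm r}}\,\vect p$. Both fields are again constant modulus. Therefore, for any $\vect x$, a diagonal RIS attains the two endpoints $(\gcb,\rho^2\Gtb)$ and $(\rho^2\gcb,\Gtb)$.

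\emph{Part (b).} For $D\ge D^\star$, Theorem~\ref{thm:lattice} gives $x_m=(m-1)\Lambda\in\mathcal X_D$ with $\rho=1$, i.e., $\psi=0$. The FC Pareto boundary is then the single point $(\gcb,\Gtb)$, attained by $\vect u=\sqrt{P_{\rm r}}\vect p$. By part (a), this field is constant modulus, so the diagonal RIS on the lattice attains $(\gcb,\Gtb)$.

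To conclude that interconnections bring no further gain, note that $\gcb$ and $\Gtb$ are the single-function maxima for any architecture and any placement. Indeed, they do not depend on $\vect x$ by Theorem~\ref{thm:sep}, and Cauchy--Schwarz with $\|\vect u\|^2=P_{\rm r}$ bounds $\gc$ and $\Gt$ by them. Hence no BD-RIS at any placement can exceed this point in either coordinate.

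\emph{Main obstacle.} The only step needing care is the $\rho=1$ case in part (a). There $\vect q$ equals $\vect p$ only up to a phase, so I must confirm that $\sqrt{P_{\rm r}}\vect p$ is still constant modulus. It is, because constant modulus is preserved under any global phase. I do not expect any other step to be difficult.
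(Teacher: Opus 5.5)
Your proof is correct and follows the same route as the paper, which gives no separate proof and instead justifies the corollary inline: $|g_m|^2=P_{\rm r}/M$ together with Lemma~\ref{lem:reach} for singleton groups yields diagonal reachability, the endpoint fields $\sqrt{P_{\rm r}}\,\vect p$ and $\sqrt{P_{\rm r}}\,\tilde{\vect q}$ are constant modulus under LoS, and on the lattice $\rho=1$, so $(\gcb,\Gtb)$ is attained by $\sqrt{P_{\rm r}}\,\vect p$. Your Cauchy--Schwarz bound, showing that $(\gcb,\Gtb)$ dominates every architecture at every placement, just makes explicit what the paper means by calling these the single-function maxima.
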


For $D<D^\star$, the diagonal region depends on all the phases $2\pi\Ds x_m/\lambda$, not only on $\rho$: for $M=4$, $d_0=\lambda/2$ and $\Ds=1/2$, the placements $(0,2,4,5)\lambda$ and $(0,2,2.5,3.5)\lambda$ both give $\rho=1/2$, but their best values of $\gc/\gcb+\Gt/\Gtb$ are $1.25$ and $1.457$. The placement~\eqref{eq:place} is therefore optimal for the FC BD-RIS only. For diagonal and group-connected RISs below $D^\star$, we use it as an achievable design.

\begin{corollary}[Minimum aperture]\label{cor:aperture}
Let $\rho_{\max}(D)=\max_{\vect x\in\mathcal X_D}\rho(\vect x)$. The FC BD-RIS meets $\gc\ge\eta\gcb$ without any sensing loss iff $D\ge D_\eta=\min\{D:\rho_{\max}(D)\ge\sqrt\eta\}$. Moreover, $D_\eta\le D^\star$, with equality for $\eta=1$, and $\rho_{\max}(D)$ is nondecreasing because $\mathcal X_D\subseteq\mathcal X_{D'}$ for $D\le D'$.
\end{corollary}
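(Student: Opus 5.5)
The plan is to reduce the statement to Corollary~\ref{cor:closed}(a) combined with Theorem~\ref{thm:sep}. Fix a placement $\vect x$. By Theorem~\ref{thm:sep}, $\gcb$ and $\Gtb$ do not depend on $\vect x$ under LoS, and $\vect x$ enters the region only through $\rho(\vect x)$. Corollary~\ref{cor:closed}(a) says that, under $\gc\ge\eta\gcb$, the FC BD-RIS attains the largest illumination
\begin{equation*}
\Gtb\cos^2\!\big([\psi(\vect x)-\arccos\sqrt\eta\,]^+\big),
\end{equation*}
and that this equals $\Gtb$ iff $\rho(\vect x)\ge\sqrt\eta$. ``No sensing loss'' means exactly $\Gt^\star=\Gtb$, since $\Gtb$ is the single-function maximum for every placement. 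Hence the requirement can be met without loss by some $\vect x\in\mathcal X_D$ iff $\rho_{\max}(D)\ge\sqrt\eta$. The maximum in $\rho_{\max}(D)$ is attained, because $\rho$ is continuous and $\mathcal X_D$ is compact and nonempty for $D\ge(M-1)d_0$.

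Next, the monotonicity claim is immediate. If $D\le D'$, the constraint $x_M\le D$ implies $x_M\le D'$, so $\mathcal X_D\subseteq\mathcal X_{D'}$ and the maximum over the larger set is no smaller. With monotonicity, the set $\{D:\rho_{\max}(D)\ge\sqrt\eta\}$ is an up-ray, so the ``iff $D\ge D_\eta$'' form is valid once we know the infimum is attained. To show this, I would argue that $\rho_{\max}$ is upper semicontinuous from the right. Take $D_k\downarrow D$ with maximizers $\vect x_k$, which lie in a compact set. Extract a convergent subsequence. Its limit satisfies the closed constraints with $x_M\le D$, and by continuity of $\rho$ it has $\rho\ge\limsup\rho_{\max}(D_k)\ge\sqrt\eta$. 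So the minimum exists whenever the set is nonempty. Nonemptiness follows from Theorem~\ref{thm:lattice}: $\rho_{\max}(D^\star)=1\ge\sqrt\eta$. This also gives $D_\eta\le D^\star$.

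Finally, for $\eta=1$, the condition becomes $\rho_{\max}(D)=1$, and by~\eqref{eq:Dstar} this holds iff $D\ge D^\star$, so $D_1=D^\star$. The only delicate point is the attainment of the minimum in the definition of $D_\eta$, which the compactness argument settles. Every other step is a direct citation of earlier results.
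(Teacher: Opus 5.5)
Your proof is correct and follows the paper's route; the paper gives no separate proof and reads the corollary off the same ingredients: Corollary~\ref{cor:closed}(a) with Theorem~\ref{thm:sep} for ``loss-free iff $\rho_{\max}(D)\ge\sqrt\eta$'', the inclusion $\mathcal X_D\subseteq\mathcal X_{D'}$ for monotonicity, and Theorem~\ref{thm:lattice} for $D_\eta\le D^\star$ with equality at $\eta=1$. Your compactness argument, which shows that the maximum in $\rho_{\max}(D)$ and the minimum defining $D_\eta$ are attained, is sound and supplies a rigor step the paper leaves implicit.
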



\begin{remark}[Position errors]
For i.i.d.\ errors $e_m\sim\mathcal N(0,\sigma^2)$ on $x_m$, Jensen's inequality gives $\mathbb E[\rho]\ge\rho(\vect x)\exp(-2\pi^2\Ds^2\sigma^2/\lambda^2)$. For $\sigma=\lambda/20$ and $\Ds=0.52$, the factor is $0.987$, so the lattice is robust to small positioning errors.
\end{remark}

\subsection{Placement for $D<D^\star$ and Beyond LoS}
For a smaller aperture,~\eqref{eq:place} is nonconvex. Using $|z|=\max_\omega\Re\{e^{-j\omega}z\}$,
\begin{equation}
M\max_{\vect x\in\mathcal X_D}\rho(\vect x)=\max_{\omega}\max_{\vect x\in\mathcal X_D}\sum_{m=1}^M\cos(2\pi\Ds x_m/\lambda-\omega).
\label{eq:omega}
\end{equation}
For fixed $\omega$, the inner problem is separable apart from the ordering constraints in~\eqref{eq:X}. On a grid of step $\delta$ with $d_0=K\delta$ and $f_i=\cos(2\pi\Ds i\delta/\lambda-\omega)$, the recursion
\begin{equation}
V_1(i)=f_i,\qquad V_m(i)=f_i+\max_{j\le i-K}V_{m-1}(j),
\label{eq:dp}
\end{equation}
evaluated with running maxima, solves the inner grid problem exactly in $\mathcal O(MD/\delta)$ operations.

\begin{proposition}[Accuracy]\label{prop:dp}
Let $\hat\rho$ be the best $\rho$ returned by~\eqref{eq:dp} over $N_\omega$ equispaced values of $\omega$. Then $\hat\rho\ge\cos(\pi/N_\omega)\,[\rho_{\max}(D)-2\pi|\Ds|\delta/\lambda]$.
\end{proposition}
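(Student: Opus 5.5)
We split the loss into two parts: discretizing the positions, and discretizing the angle $\omega$. Let $\vect x^\circ\in\mathcal X_D$ attain $\rho_{\max}(D)$, and write $S(\vect x)=\sum_m e^{j2\pi\Ds x_m/\lambda}$, so that $M\rho(\vect x)=|S(\vect x)|$.

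\emph{Step 1 (grid rounding).} We round each $x^\circ_m$ down to the grid, $\hat x_m=\delta\lfloor x^\circ_m/\delta\rfloor$. Ordering is preserved. We have $\hat x_1\ge0$ and $\hat x_M\le D$. For the spacing, $\hat x_{m+1}-\hat x_m>x^\circ_{m+1}-x^\circ_m-\delta\ge d_0-\delta=(K-1)\delta$; since grid differences are multiples of $\delta$, this gives $\hat x_{m+1}-\hat x_m\ge K\delta=d_0$. Hence $\hat{\vect x}$ is feasible for the grid problem. Each phase moves by at most $2\pi|\Ds|\delta/\lambda$, and $|e^{ja}-e^{jb}|\le|a-b|$, so
\[
|S(\hat{\vect x})|\ge M\rho_{\max}(D)-2\pi M|\Ds|\delta/\lambda .
\]
This yields the bracketed term. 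If the bracket is negative the claim is trivial, since $\hat\rho\ge0$.

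\emph{Step 2 ($\omega$ quantization).} Let $\omega_0=\angle S(\hat{\vect x})$ and choose the grid value $\omega_k$ closest to $\omega_0$ modulo $2\pi$. With spacing $2\pi/N_\omega$ this gives $|\omega_k-\omega_0|\le\pi/N_\omega$. Then
\[
\sum_m\cos(2\pi\Ds\hat x_m/\lambda-\omega_k)=\Re\{e^{-j\omega_k}S(\hat{\vect x})\}=|S(\hat{\vect x})|\cos(\omega_0-\omega_k)\ge\cos(\pi/N_\omega)\,|S(\hat{\vect x})| .
\]

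\emph{Step 3 (exactness of the DP).} For fixed $\omega_k$, the recursion~\eqref{eq:dp} solves the inner grid problem exactly. It therefore returns some grid placement $\vect y$ whose objective is at least that of $\hat{\vect x}$. Using $|z|\ge\Re\{e^{-j\omega}z\}$, we get $M\rho(\vect y)\ge\Re\{e^{-j\omega_k}S(\vect y)\}\ge\cos(\pi/N_\omega)|S(\hat{\vect x})|$. The algorithm reports $\hat\rho$ as the best true $\rho$ among the returned placements, so $\hat\rho\ge\rho(\vect y)$. Combining with Step 1 gives the bound.

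The main subtlety is the feasibility of the rounded placement in Step 1. It hinges on $d_0$ being an integer multiple of $\delta$, and on $D$ either lying on the grid or the grid covering $[0,D]$; flooring keeps the points inside $[0,D]$ in both cases. A secondary point is that $\hat\rho$ must be the true $\rho$ of the DP output, not the cosine objective; the latter only lower-bounds it, which is the direction we need.
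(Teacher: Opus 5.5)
Your proof is correct and follows the paper's argument. You round a continuous optimizer down to the grid, which is feasible because $d_0=K\delta$ and costs at most $2\pi|\Ds|\delta/\lambda$ of phase per element, and you combine the $\omega_k$ within $\pi/N_\omega$ of the phase of the array sum with the exactness of the DP at that $\omega_k$; the only differences are that you apply the $\omega$-step to the rounded optimizer $\hat{\vect x}$ rather than to the grid optimum, and that you write out the spacing-feasibility check the paper states in one line.
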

\begin{IEEEproof}
Rounding every position of a continuous optimizer down to the grid keeps it in $\mathcal X_D$, because $d_0=K\delta$, and changes each phase by at most $2\pi|\Ds|\delta/\lambda$. The grid optimum therefore loses at most $2\pi|\Ds|\delta/\lambda$ in $\rho$. For the grid optimum $\vect x^\circ$, some $\omega_k$ lies within $\pi/N_\omega$ of $\angle\sum_m e^{j2\pi\Ds x^\circ_m/\lambda}$, and the DP at $\omega_k$ returns $\vect x$ with $M\rho(\vect x)\ge M\rho(\vect x^\circ)\cos(\pi/N_\omega)$.
\end{IEEEproof}

We use $N_\omega=96$ with local refinement around the best $\omega$, so the factor is $0.9995$. For $D\ge D^\star$, the analytical lattice of Theorem~\ref{thm:lattice} is used directly.

For non-LoS channels, Theorem~\ref{thm:pareto}, Corollary~\ref{cor:closed} and Proposition~\ref{prop:group} still hold, but $\gcb$ depends on $\vect x$. For an absolute requirement $\gc\ge\Gamma$, Corollary~\ref{cor:closed}(a) with $\eta(\vect x)=\min(1,\Gamma/\gcb(\vect x))$ gives a closed-form objective in $\rho(\vect x)$ and $\|\hu(\vect x)\|$. If $\Gamma>\gcb(\vect x)$, the user is served at $\gcb(\vect x)$. We improve this objective by exact grid updates of each $x_m$ between its neighbours, initialized at the lattice, accepting only strict improvements. The search is monotone and terminates, but it is local. Algorithm~\ref{alg:design} summarizes the design.

\begin{algorithm}[!t]
\caption{Movable BD-RIS ISAC design}
\label{alg:design}
\begin{algorithmic}[1]
\Require Angles $\theta_{\rm u}\neq\theta_{\rm t}$,
aperture $D$, requirement $\eta$

\State \textbf{Placement:}
If $D\ge D^\star$, set
$x_m=(m-1)\lambda/|\Ds|$
(Theorem~\ref{thm:lattice});
otherwise solve~\eqref{eq:omega}
by the DP~\eqref{eq:dp}.
For non-LoS channels, refine by coordinate ascent.

\State \textbf{Transmitter:}
MRT toward the RIS (Lemma~\ref{lem:tx}).

\State \textbf{Field:}
FC:
$\vect u=\vect u^\star(
\min(\psi,\arccos\sqrt\eta))$
(Theorem~\ref{thm:pareto});
group-connected or diagonal:
search over $\vect z$
in~\eqref{eq:prop1}.

\State \textbf{Scattering matrix:}
Synthesize $\vect\Theta$ and $\vect X$
blockwise by Lemma~\ref{lem:reach}.

\end{algorithmic}
\end{algorithm}

\emph{Complexity:} Step~1 costs $\mathcal O(N_\omega MD/\delta)$, step~3 costs $\mathcal O(M)$ for the FC BD-RIS and $\mathcal O(N_{\rm s}M)$ for $N_{\rm s}$ search points otherwise, and step~4 costs $\mathcal O(MM_{\rm g}^2)$ over all groups. Unlike~\cite{esmaeilbeig2025bd,liu2024enhancing,ye2026joint,peng2026exploiting}, no alternating optimization is needed.

\section{Simulation Results}\label{sec:sim}
In this section, we asses the perforamnce of the proposed design. Table~\ref{tab:params} lists the simulation parameters. Under LoS, $\gcb$ and $\Gtb$ do not depend on $\vect x$, so we report metrics normalized by these single-function maxima; they hold for any transmit power $P$, number of BS antennas $N$, path loss and reflection coefficient $\xi$. Since $\mathrm{SNR}_{\rm s}\propto\Gt^2$, the sensing SNR loss is $-20\log_{10}(\Gt/\Gtb)$\,dB.

 \begin{table}[t]
\centering
\caption{Simulation Parameters}
\label{tab:params}
\scriptsize
\setlength{\tabcolsep}{3pt}
\renewcommand{\arraystretch}{1.05}
\begin{tabular}{@{}lll@{}}
\hline
\textbf{Parameter} & \textbf{Symbol} & \textbf{Value} \\
\hline
RIS elements & $M$ & $16$ ($32$ in Figs.~\ref{fig:rho}, \ref{fig:mismatch}, \ref{fig:rician}) \\
Minimum spacing & $d_0$ & $\lambda/2$ \\
Carrier frequency & $f_c$ & $28$\,GHz ($\lambda=1.07$\,cm) \\
BS--RIS angle of arrival & $\varphi$ & $-35^\circ$ \\
User / target angle & $\theta_{\rm u}$, $\theta_{\rm t}$ & $20^\circ$ / $-10^\circ$ ($[-60^\circ,15^\circ]$ in Fig.~\ref{fig:mismatch}) \\
Spatial-frequency offset & $\Ds$ & $0.516$ \\
Lattice spacing & $\Lambda$ & $1.94\lambda$ \\
Trade-off-free aperture & $D^\star$ & $29.1\lambda$ ($60.1\lambda$ for $M=32$) \\
Fixed baseline & -- & $\lambda/2$-spaced array \\
DP grid / phase search & $\delta$, $N_\omega$ & $\lambda/100$ / $96$ \\
User SNR loss tolerance & $\eta$ & $1$\,dB, $3$\,dB \\
Group size & $M_{\rm g}$ & $4$ \\
\hline
\multicolumn{3}{@{}l}{\emph{Rician RIS--user link (Fig.~\ref{fig:rician})}}\\
Rician factor & $K$ & $-5$ to $25$\,dB \\
Scattered paths & $L$ & $5$, angles uniform in $[-60^\circ,60^\circ]$ \\
User requirement & $\Gamma$ & $3$\,dB below LoS maximum \\
Position step / realizations & -- & $\lambda/50$ / $100$ \\
\hline
\end{tabular}
\vspace{-3mm}
\end{table}

\begin{figure}[!t]
    \centering
    \includegraphics[width=\columnwidth]{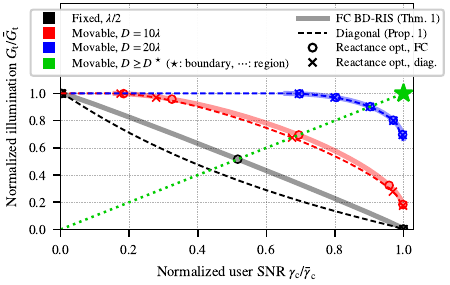}
    \vspace{-1.5mm}
    \caption{Pareto boundaries for two movable apertures
    and the fixed $\lambda/2$ array ($M=16$).
    Thick lines: FC (Theorem~\ref{thm:pareto});
    thin dashed: diagonal (Proposition~\ref{prop:group}),
    whose largest gap to FC is $0.16$, $0.04$ and $0.003$;
    markers: direct optimization of $\vect X$;
    green: $D\ge D^\star$.}
    \label{fig:pareto}
    \vspace{-3mm}
\end{figure}

\begin{figure*}[!t]
    \centering

    \subfloat[Sensing SNR loss versus aperture for user
    SNR losses of 1\,dB (solid) and 3\,dB (dashed),
    $M=16$.\label{fig:loss}]{%
        \includegraphics[
            width=0.32\textwidth,
            height=4.1cm,
            keepaspectratio
        ]{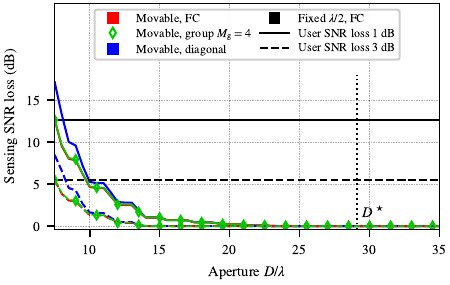}%
    }
    \hfill
    \subfloat[Normalized illumination at full user SNR
    versus target angle, at equal aperture
    $D=\lceil D^\star\rceil$ designed for
    $\theta_{\rm t}=-10^\circ$.
    \label{fig:mismatch}]{%
        \includegraphics[
            width=0.32\textwidth,
            height=4.1cm,
            keepaspectratio
        ]{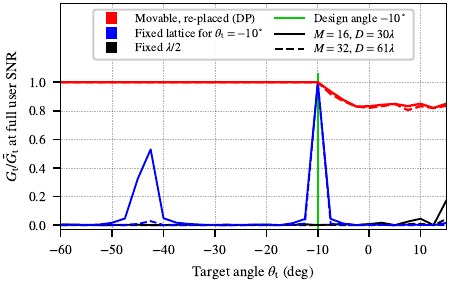}%
    }
    \hfill
    \subfloat[Average sensing SNR loss versus the Rician
    factor of the RIS--user link (absolute user requirement
    3\,dB below the LoS maximum, 100 realizations,
    95\% confidence intervals).
    \label{fig:rician}]{%
        \includegraphics[
            width=0.32\textwidth,
            height=4.1cm,
            keepaspectratio
        ]{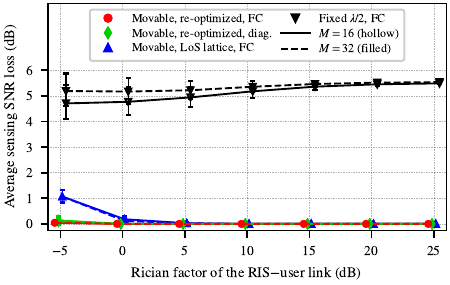}%
    }

    \vspace{-1.5mm}
    \caption{Sensing performance of the proposed architecture:
    (a) impact of aperture size and user SNR requirements;
    (b) target-angle mismatch;
    (c) robustness under Rician fading.}
    \label{fig:sensing_results}
    \vspace{-3mm}
\end{figure*}

\emph{Pareto boundaries:} Fig.~\ref{fig:pareto} compares $D=10\lambda$ and $D=20\lambda$ with the fixed $\lambda/2$ array. The FC curves are exact by Theorem~\ref{thm:pareto}. As a consistency check, the markers optimize $\vect X$ directly by L-BFGS from random starts, without using Lemma~\ref{lem:reach}. The FC markers agree with~\eqref{eq:pareto} to within $10^{-9}$. The diagonal markers (best of 150 starts) lie on the supported points from Proposition~\ref{prop:group}. The fixed array has nearly orthogonal channels ($\rho=0.033$); movement raises $\rho$ to $0.39$ and $0.81$, pushing the boundary outward. For $D\ge D^\star$, the Pareto boundary of both the FC and the diagonal RIS is the corner $(1,1)$ (green star), while the achievable region is the green dotted segment, as Theorem~\ref{thm:lattice} predicts. The diagonal boundary touches the FC boundary at its endpoints (Corollary~\ref{cor:moving}); for the fixed array it is nonconvex. The figure thus separates the two degrees of freedom: interconnection matters only at fixed positions, where the largest gap between the FC and diagonal boundaries is $0.16$, whereas movement shrinks this gap to $0.04$ and $0.003$ at $D=10\lambda$ and $20\lambda$ and removes it at $D\ge D^\star$, while enlarging the region of both architectures far more.

\begin{figure}[t]
\centering
\includegraphics{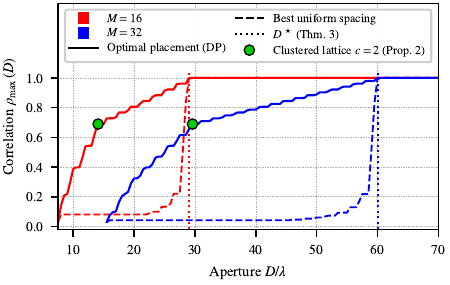}
\vspace{-1.5mm}
\caption{Maximum correlation versus aperture for $M\in\{16,32\}$; dotted lines mark $D^\star$ (Theorem~\ref{thm:lattice}), green circles Proposition~\ref{prop:cluster}.}
\label{fig:rho}
\vspace{-3mm}
\end{figure}

\emph{Moving vs.\ interconnecting:} Fig.~\ref{fig:loss} shows the sensing SNR loss when the user may lose 1\,dB or 3\,dB of SNR. The fixed FC BD-RIS loses $12.7$ and $5.5$\,dB; the fixed diagonal RIS (not shown) loses $17.1$ and $8.5$\,dB. With movement, the FC loss vanishes once $\rho\ge\sqrt\eta$ (Corollary~\ref{cor:aperture}), at $D=25\lambda$ and $15.5\lambda$. With the same placement, the diagonal RIS matches the FC BD-RIS from these apertures on and trails by at most $0.61$ and $0.29$\,dB for $D\ge10\lambda$. A group-connected RIS with $M_{\rm g}=4$ stays within $0.14$\,dB of the FC BD-RIS.

\emph{Geometry mismatch:} In Fig.~\ref{fig:mismatch}, the aperture is fixed at $D=\lceil D^\star\rceil$ for the nominal angle, and the user must keep its full SNR ($\eta=1$, so $\Gt/\Gtb=\rho^2$). The comparison isolates the value of movability at equal aperture: a fixed sparse array on the nominal lattice is optimal at $\theta_{\rm t}=-10^\circ$, but away from it its illumination stays below $0.53$ ($M=16$) and $0.03$ ($M=32$), close to the $\lambda/2$ array. The movable surface, re-placed by the DP for each $\theta_{\rm t}$, keeps $\Gt=\Gtb$ for $\theta_{\rm t}\le-10^\circ$, where $D\ge D^\star$, and above $0.80\,\Gtb$ when the target moves toward the user and $D<D^\star$.

\emph{Rician fading:} In Fig.~\ref{fig:rician}, the RIS--user link has $L=5$ scattered paths at angles uniform in $[-60^\circ,60^\circ]$, and the user requires an absolute SNR 3\,dB below its LoS maximum. Results are averaged over 100 realizations with 95\% confidence intervals. The LoS lattice of Theorem~\ref{thm:lattice} loses $1.1$\,dB at a Rician factor of $-5$\,dB and less than $0.2$\,dB from $0$\,dB on. Coordinate-wise re-optimization reduces the loss to at most $0.05$\,dB (FC) and $0.14$\,dB (diagonal) for both $M$, while the fixed $\lambda/2$ array loses $4.7$--$5.5$\,dB. The requirement cannot be met in at most $3\%$ of the realizations with the re-optimized placement and $7\%$ with the fixed array; in those cases the user is served at its maximum.

\emph{Phase transition:} Fig.~\ref{fig:rho} plots $\rho_{\max}(D)$ for $M\in\{16,32\}$. The DP curves approach $1$ as $D\to D^\star$, where the analytical lattice gives exactly $\rho=1$. At $D=D^\star/2$ the DP attains $\rho=0.69$ for both $M$, whereas the best uniform spacing stays at $0.08$ ($M=16$) and $0.04$ ($M=32$). The plateau is the clustered lattice of Proposition~\ref{prop:cluster} with $c=2$ (circles): the DP returns pairs of elements on the lattice and attains $\rho_2=0.6895$ at $D_2=14.1\lambda$ ($29.6\lambda$ for $M=32$).

\section{Conclusion}
For a movable BD-RIS serving one user and illuminating one target, the communication--sensing Pareto boundary is available in closed form and is governed by a single quantity, the correlation between the user and target channels at the surface. Under LoS, the placement that maximizes this correlation is optimal for the FC BD-RIS. A lattice of spacing $\lambda/|\Ds|$ steers a grating lobe at the target and reduces the Pareto boundary to the single-function optimum once $D\ge(M-1)\lambda/|\Ds|$; there, a movable diagonal RIS is also optimal. Below this aperture, a DP with certified accuracy and a clustered lattice give the placement, and the diagonal RIS remains within a fraction of a decibel of the FC BD-RIS at moderate apertures. The gains hold at equal aperture under geometry mismatch and, after local re-optimization, under Rician fading.  

\bibliographystyle{IEEEtran}
\bibliography{refs}
\end{document}